\providecommand{\be}{\begin{eqnarray*}}
\providecommand{\ee}{\end{eqnarray*}}
\theoremstyle{definition}
\newtheorem{thm}{Theorem}
\newtheorem{lem}[thm]{Lemma}
\begin{document}
\title{Mathematical Modeling of Myosin Induced Bistability of Lamellipodial Fragments}
\author{S. Hirsch\footnote{Faculty of Mathematics, University of Vienna, Oskar-Morgenstern-Platz 1, 1090 Vienna, Austria, Tel. +431427750718, Email: stefanie.hirsch@univie.ac.at}, 
A. Manhart\footnote{Faculty of Mathematics, University of Vienna, Oskar-Morgenstern-Platz 1, 1090 Vienna, Austria, Email: angelika.manhart@univie.ac.at}, 
C. Schmeiser\footnote{Faculty of Mathematics, University of Vienna, Oskar-Morgenstern-Platz 1, 1090 Vienna, Austria, Email: christian.schmeiser@univie.ac.at}}
\maketitle
\begin{abstract}
For various cell types and for lamellipodial fragments on flat surfaces, externally induced and spontaneous transitions
between symmetric nonmoving states and polarized migration have been observed. This behavior is indicative of
bistability of the cytoskeleton dynamics. In this work, the Filament Based Lamellipodium Model (FBLM), a two-dimensional,
anisotropic, two-phase continuum model for the dynamics of the actin filament network in lamellipodia, is extended by 
a new description of actin-myosin interaction. For appropriately chosen parameter values, the resulting model has bistable
dynamics with stable states showing the qualitative features observed in experiments. This is demonstrated by numerical
simulations and by an analysis of a strongly simplified version of the FBLM with rigid filaments and planar lamellipodia at
the cell front and rear.
\end{abstract}

\textbf{Keywords}: Actin, Mathematical Model, Cytoskeleton\\

\textbf{Acknowledgments}: Financial support by the Austrian Science Fund (FWF) through the doctoral school Dissipation and Dispersion in Nonlinear PDEs (project W1245, S.H./A.M.) as well as the Vienna Science and Technology Fund (WWTF) (project LS13/029, C.S.)
\newpage
\section{Introduction}
In a variety of physiological processes such as wound healing, immune response, or embryonic development, crawling cells play a vital role \cite{Anan}. Cell motility is the result of an interplay between protrusion at the 'front' edge of the cell (w.r.t. the direction of movement), retraction at the rear, as well as translocation of the cell body \cite{SmaRe}. 
It only occurs when the cell is polarized with a front and a differently shaped rear \cite{Kozlov}. 

Both protrusion and retraction involve the so-called \textit{lamellipodium}, a thin, sheet-like structure along the perimeter of a cell, consisting of a meshwork of \textit{actin filaments}. 
F-actin is a polar dimer that forms inextensible filaments with a fast-growing plus (barbed) end and a slow-growing minus  (pointed) end \cite{Holmes}. 

The barbed ends abut on the membrane at the leading edge \cite{Mogilner} and have a high probability of polymerization (i.e. elongation of the filament by insertion of new actin monomers), whereas at the pointed ends mostly depolymerization (removal of one monomer) or disassembly of larger parts through severing of the filament occurs.  Once a balance between polymerization and depolymerization is reached, 
each incorporated monomer is being pushed back by newly added monomers. Using the filament itself as a frame of reference, this can be described as movement of monomers from the barbed end towards the pointed end, a process called \textit{treadmilling} (see \cite{MOSS} and the references therein for an overview of the involved processes and proteins). New filaments are nucleated predominantly by branching off existing filaments. The resulting meshwork is an (almost)
two-dimensional array of (almost) diagonally arranged actin filaments with decreasing density towards the cell body \cite{SmaHe,Vinzenz}. 

The lamellipodium is stabilized by the cell membrane (surrounding the entire cell \cite{MiCra,Vallotton}), adhesions to the substrate \cite{Li,Pierini}, cross-linking proteins \cite{Nakamura,Schwaiger} and myosin II filaments \cite{Svitkina}, the
latter two binding to pairs of filaments. 
Some of the long filaments from the lamellipodium extend into the region behind, where (through the contractile effect of myosin II) forces are generated which pull the lamellipodium backwards \cite{SmaRe}.

Fish epidermal keratocytes are fast-moving cells with a relatively simple shape (circular, when stationary and crescent-moon-shaped, when moving \cite{Lee}), which makes them ideal subjects for analysis. Furthermore, they exhibit a lamellipodium with a smooth edge and a fairly uniform distribution of filaments \cite{Lacayo,SmaRe,Theriot}. 
During the transition from the stationary to the moving state, the lamellipodium in the rear of the cell collapses and the \textit{rear bundle} is formed, where myosin II generates a contractile force \cite{Svitkina,Tojk,VeSviBo}.

Treatment with staurosporine (a protein kinase inhibitor) results in the formation of completely detached lamellipodial fragments, lacking a cell body, microtubules and most other cell organelles. Remarkably, these fragments can either remain stationary while adopting a circular shape, or can move on their own, adapting their appearance to the same crescent-moon shape as the keratocyte itself \cite{Kozlov,Verkhovsky} (see Figure \ref{fig:fig1}A-C). This suggests that the necessary ingredients for movement are all present in the lamellipodium (until it runs out of energy).

\begin{figure}[h!]
\caption{ A: a moving keratocyte (right) and a moving cytoplast (left), actin is labelled in green, the nucleus in blue. B and C: a moving and a stationary cytoplast (fragment), respectively. The actin network is labelled in red, myosin in green. A, B and C are reproduced from \cite{Manhart}. E:
idealization with protruding lamellipodium at the top and lamellipodium collapsed by
actin-myosin interaction at the bottom. D: model ingredients of the simplified FBLM (clockwise, starting top left):
cross-link stretching, cross-link twisting, filament-substrate adhesion, connection
between front and rear by stress fibres, membrane stretching, actin-myosin interaction.}
\includegraphics[width=\textwidth]{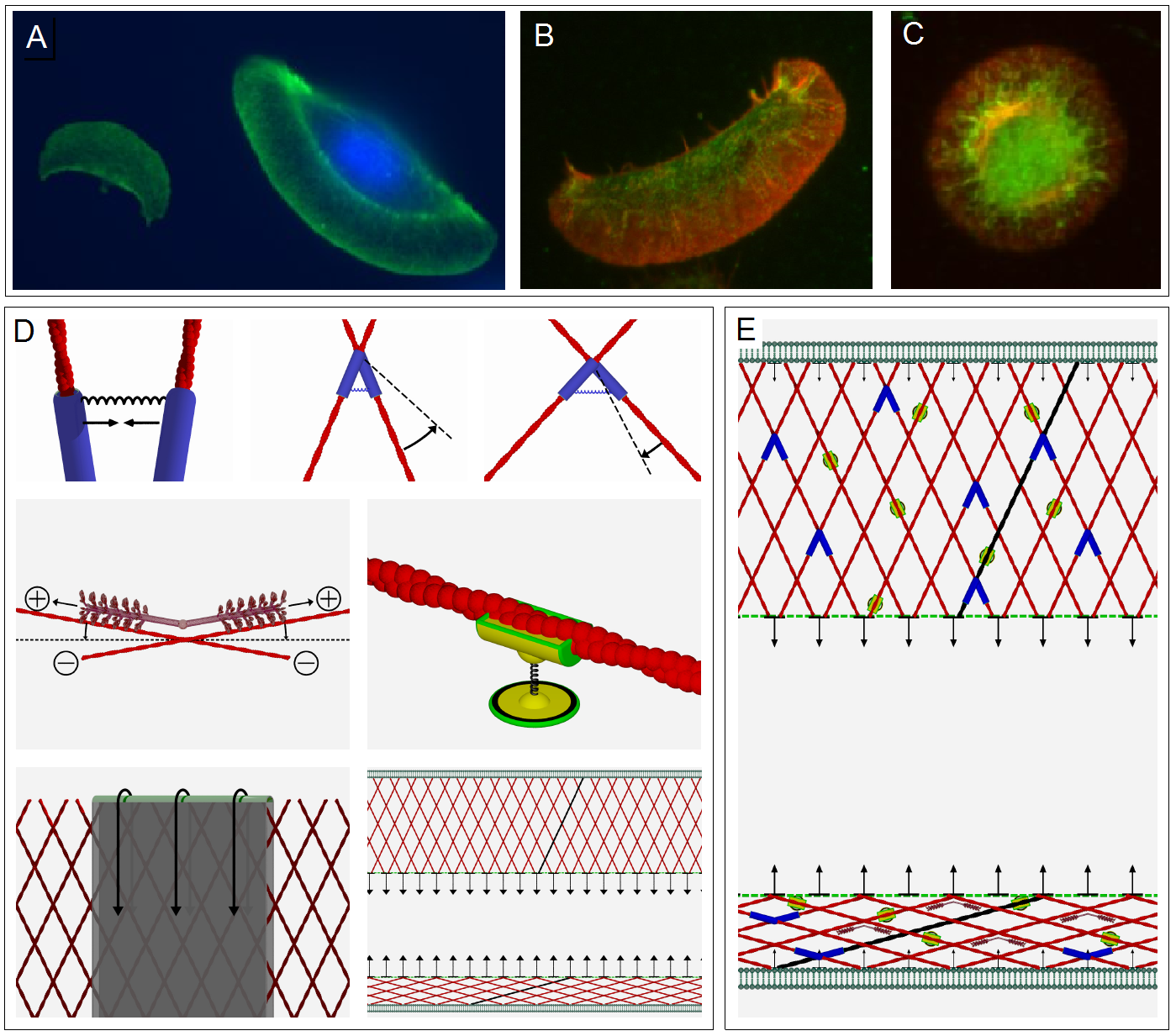}

\label{fig:fig1}
\end{figure}

Various approaches to continuum mechanical modeling of the lamellipodium exist \cite{Kruse,Rubinstein,Ziebert}.  
This work is based on the FBLM \cite{MOSS,OelSch1,OelSch2}, a two-dimensional, anisotropic, two-phase model
derived from a microscopic (i.e. individual filament based) description, accounting for most of the phenomena 
mentioned above. It describes the actin network in terms of two transversal families of locally parallel filaments,
stabilized by transient cross-links and substrate adhesions. In Section \ref{sec:add-myo} the FBLM is presented and
extended by a model for actin-myosin interaction between the two families. We assume that myosin filaments can 
connect only when the families are anti-parallel enough and they are described as transient, similar to cross-links. 
They tend to slide the two families relative to each other, and they are assumed to have a turning effect, making the two
families more anti-parallel. These properties are expected to produce the desired bistable behavior. This is 
demonstrated by numerical simulations in Section \ref{sec:sim}, which indicate the existence of two stable states,
a rotationally symmetric nonmoving state and a polarized state, where the cell moves. The moving state is characterized 
by a more anti-parallel network in the rear of the cell, where actin-myosin interaction is active. Complete collapse
of the network and consequential generation of a rear bundle are avoided, since the FBLM is (so far) unable to deal
with such topological changes.

The occurrence of bistability is also proven analytically for a strongly simplified model. In Section \ref{sec:rigid} the 
complexity of the model is reduced in a first step by assuming rigid filaments. Then a planar, translationally invariant
lamellipodium is considered in Section \ref{sec:planar}, which reduces the model to a system of three ordinary differential equations. Here we also neglect the effects of branching and capping, assumed to be in equilibrium, as well as filament severing
within the modelled part of the lamellipodium, implying a constant actin density there.
Bistability is obtained for this model in Section \ref{sec:steady}. Finally, in Section \ref{sec:coupling}
a cell (fragment) is replaced by a pair of connected back-to-back planar lamellipodia, and the existence of stable
stationary (symmetric) as well as moving (polarized) states is proven. The same bistable behavior is observed in the simulations of the full model in Section \ref{sec:sim}.\\

Figure \ref{fig:fig1} depicts the main components of the simplified version of the FBLM (D and E) together with one keratocyte and three fragments (A-C). The crescent-moon shaped cells and cell fragments are moving, whereas the circularly shaped fragment remains stationary. One can also observe that in moving fragments myosin can predominantely be found at the cell rear. In Figure \ref{fig:fig1}E, the idealized model obtained in Sections \ref{sec:rigid}-\ref{sec:coupling} is illustrated. It can be interpreted as description of lamellipodial sections at
the front and at the rear of the cell. The main model ingredients are depicted in \ref{fig:fig1}D: diagonally arranged filaments (red), 
the cell membrane (green, with arrows indicating the force acting on the barbed ends due to membrane tension), 
cross-links (blue, producing friction between the filament families and a turning force trying to establish an equilibrium
angle), adhesions (yellow, producing friction relative to the substrate), myosin filaments (pink, trying to slide the filament
families and to make them anti-parallel), and the inward pulling forces due to stress fibers in the interior of the cell
(dashed green line and arrows).

\section{Adding actin-myosin interaction to the Filament Based Lamellipodium Model (FBLM)} \label{sec:add-myo}

Our starting point is the FBLM as introduced in \cite{OelSch1} (see also \cite{MOSS}):
\begin{eqnarray} \label{FBLM}
  0&=&\mu^B \partial_s^2\left(\eta \partial_s^2 F\right) + \mu^A \eta D_t F
    - \partial_s\left(\eta \lambda_\text{inext} \partial_s F\right)\\
&&+ \widehat{\mu^S}\eta\eta^* (D_t F - D_t^* F^*) \pm \partial_s\left(\widehat{\mu^T}\eta\eta^* (\varphi-\varphi_0)\partial_s F^{\perp}\right) \,,\nonumber
\end{eqnarray}
where $F=F(\alpha,s,t)\in\mathbb{R}^2$ describes the position and deformation of actin filaments in the plane at time $t$. More precisely, the variable 
$\alpha\in A\subset\mathbb{R}$, for some interval $A$, is a filament label, and $s\in[-L(\alpha,t),0]$ denotes an arclength parameter along filaments, which means
that the constraint 
\begin{equation} \label{inextensibility}
  |\partial_s F| = 1
\end{equation}
has to be satisfied. Here $L(\alpha,t)$ is the maximal length of filaments in an infinitesimal region $d\alpha$ around $\alpha$. The filament length density
with respect to $\alpha$ and $s$ is given by $\eta(\alpha,s,t)$, which will be assumed as given (see \cite{MOSS} for a dynamic model incorporating
polymerization, depolymerization, nucleation, and branching effects). The value $s=0$ corresponds to the so called \emph{barbed ends} of the polar
filaments, abutting the leading edge of the lamellipodium. The rear boundary $s=-L(\alpha,t)$ is introduced somewhat artificially since the rear end of the
lamellipodium is typically not well defined. By polymerization with speed $v(\alpha,t)$ (also assumed as given in this work), monomers move
along filaments in the negative $s$-direction. Their speed relative to the nonmoving substrate is therefore given by $D_t F$ with the material derivative
$D_t = \partial_t - v\partial_s$.

The terms in the first line of \eqref{FBLM} correspond to the filaments' resistance against bending with stiffness parameter $\mu^B$, to friction relative
to the substrate as a consequence of adhesion dynamics with adhesion coefficient $\mu^A$, and to the constraint \eqref{inextensibility} with the
Lagrange multiplier $\lambda_\text{inext}$.

The FBLM is actually a two phase model, and $F$ may stand for either of the two families $F^+$ or $F^-$. The terms in the second line of \eqref{FBLM}
describe the interaction between the two families, with the other family indicated by the superscript $*$. The interaction is the consequence of dynamic
cross-linking and leads to a friction term proportional to the relative velocity between the two families and to a turning force trying to push the angle
$\varphi$ ($\cos\varphi = \partial_s F\cdot\partial_s F^*$) between crossing filaments to its equilibrium value $\varphi_0$, corresponding to the 
equilibrium conformation of the cross-linker molecule
($F^\bot = (-F_y,F_x)$). The $*$-quantities corresponding to the other family have to be evaluated at $(\alpha^*,s^*)$, determined by the requirement
$F(\alpha,s,t)= F^*(\alpha^*,s^*,t)$. It is a basic geometric modeling assumption that the coordinate change $(\alpha,s) \leftrightarrow (\alpha^*,s^*)$
is one-to-one, wherever the two families overlap. It requires that filaments of the same family do not cross each other and that pairs of filaments of different families cross each other at
most once. Finally, the coefficients are given by
\begin{equation} \label{coeff-cl}
  \widehat{\mu^S} = \mu^S \left|\frac{\partial{\alpha}^*}{\partial s}\right| \,,\qquad 
  \widehat{\mu^T} = \mu^T \left|\frac{\partial{\alpha}^*}{\partial s}\right| \,,
\end{equation}
with constants $\mu^{S,T}$, wherever $F$ crosses another filament, and zero elsewhere. The partial derivative refers
to the coordinate transformation introduced above.

The model will be extended by the effects of myosin polymers. The basic modeling assumption is that pairs of crossing actin filaments, which lie
antiparallel enough, may be connected by a bipolar myosin filament. The modeling is similar to cross-links. However, by their motor activity, the myosin
heads have the tendency to move towards the barbed end of the actin filament the myosin filament is attached to. We assume a constant equilibrium
speed $v^M$ of this movement. Transient building and breaking of actin-myosin connections are assumed to cause a friction effect. Furthermore
the actin-myosin interaction is assumed to have a turning effect on the actin filaments, which tends to align them in the antiparallel direction. This
is similar to the turning effect of cross-links, however now with the equilibrium angle $\pi$. We also assume that myosin can only act on pairs of filaments, if they are antiparallel enough, i.e. if their angle is between some cut-off value $\overline \varphi$ and $\pi$.

The modified model has the form
\begin{eqnarray} 
  0&=&\mu^B \partial_s^2\left(\eta \partial_s^2 F\right) + \mu^A \eta D_t F
    - \partial_s\left(\eta \lambda_\text{inext} \partial_s F\right) \nonumber\\
&&+ \widehat{\mu^S}\eta\eta^* (D_t F - D_t^* F^*) \pm \partial_s\left(\widehat{\mu^T}\eta\eta^* (\varphi-\varphi_0)\partial_s F^{\perp}\right)
     \label{FBLMplus}\\
&& + \widehat{\mu^{SM}}\eta\eta^* (D_t F - D_t^* F^*+ v^M(\partial_s F - \partial_s F^*)) \pm \partial_s\left(\widehat{\mu^{TM}}\eta\eta^* (\varphi-\pi)\partial_s F^{\perp}\right)\,,\nonumber
\end{eqnarray}
with 
\begin{equation} \label{coeff-myo}
  \widehat{\mu^{SM}} = \mu^{SM}(\varphi) \left|\frac{\partial{\alpha}^*}{\partial s}\right| \,,\qquad 
  \widehat{\mu^{TM}} = \mu^{TM}(\varphi) \left|\frac{\partial{\alpha}^*}{\partial s}\right| \,,
\end{equation}
where $\mu^{SM}(\varphi) = \mu^{TM}(\varphi) = 0$ for $\varphi < \overline\varphi<\pi$. For microscopic details of the model derivation see \cite{Manhart}.

Boundary conditions describe the forces acting on the filaments at their barbed ends and at the artificially introduced ends at the boundary of the
modeling domain:
\begin{eqnarray}
  &&\mu^B \partial_s\left(\eta \partial_s^2 F\right)  - \eta \lambda_\text{inext} \partial_s F 
  \pm \widehat{\mu^T}\eta\eta^* (\varphi-\varphi_0)\partial_s F^{\perp}
     \pm \widehat{\mu^{TM}}\eta\eta^* (\varphi-\pi)\partial_s F^{\perp} = -f_0 \,,\nonumber\\
  &&  \partial_s^2 F = 0 \,,\qquad\mbox{for } s=0 \,.\nonumber\\
  &&\mu^B \partial_s\left(\eta \partial_s^2 F\right)  - \eta \lambda_\text{inext} \partial_s F 
  \pm \widehat{\mu^T}\eta\eta^* (\varphi-\varphi_0)\partial_s F^{\perp}
     \pm \widehat{\mu^{TM}}\eta\eta^* (\varphi-\pi)\partial_s F^{\perp} = f_L \,,\nonumber\\
  &&  \partial_s^2 F = 0 \,,\qquad\mbox{for } s=-L \,.\label{BC}
\end{eqnarray}
Thus, there are no torques applied at the ends. The choice of the linear forces $f_0$ and $f_L$ along the leading edge and, respectively, along the 
artificial boundary will be discussed later. 

\section{Rigid actin filaments in the limit of large bending stiffness}\label{sec:rigid}

We want to derive a simplified model with rigid actin filaments. This is motivated on the one hand by the observation that filaments within the lamellipodium are typically rather straight \cite{Vinzenz}. On the other hand stiff filaments can be interpreted as a description of only the outermost part of the lamellipodial region, where filaments are (locally) straight. The resulting model is 
mathematically much simpler and can be derived by assuming a relatively large bending stiffness $\mu^B$. The limit
$\mu^B\to\infty$ will be carried out formally in this section.

The solutions of the formal limit
$$
  0 = \partial_s^2\left(\eta \partial_s^2 F\right)
$$
of \eqref{FBLMplus}, together with the boundary conditions
$$
   \partial_s^2 F = 0 \,,\qquad\mbox{for } s=0,-L \,,
$$
and with the constraint \eqref{inextensibility}, can be written as
\begin{equation} \label{rigid}
  F(\alpha,s,t) = F_0(\alpha,t) + (s-s_0(\alpha,t))d(\omega(\alpha,t)) \,,\qquad\mbox{with } d(\omega) = \binom{\cos\omega}{\sin\omega} \,,
\end{equation}
where $s_0$ is determined by 
$$
  \int_{-L}^0 \eta(\alpha,s,t)(s-s_0(\alpha,t))ds = 0 \,.
$$
In other words, $F_0$ is the center of mass of the filament, and $d(\omega)$ its direction. The components of $F_0$ and the angle $\omega$
are still to be determined. The total force balance obtained by integration of \eqref{FBLMplus} with
respect to $s$ and using the boundary conditions \eqref{BC} reads
\begin{align}
    f_0 + f_L = \int_{-L}^0 &\Bigl( \mu^A \eta D_t F+ \widehat{\mu^S}\eta\eta^* (D_t F - D_t^* F^*) \nonumber \\
  & + \widehat{\mu^{SM}}\eta\eta^* (D_t F - D_t^* F^*+ v^M(\partial_s F - \partial_s F^*))\Bigr) ds \,. \label{force-balance}
\end{align}
Note that it does not contain $\mu^B$ and therefore remains valid in the limit. Similarly, the total torque balance is obtained by
integration of \eqref{FBLMplus} against $(F-F_0)^\bot$:
\begin{align}
   (F-&F_0)^\bot(s=0)\cdot f_0 + (F-F_0)^\bot(s=-L)\cdot f_L \nonumber\\ 
  =& \mp \int_{-L}^0 \widehat{\mu^T} \eta\eta^*(\varphi-\varphi_0)ds
  \mp \int_{-L}^0 \widehat{\mu^{TM}} \eta\eta^*(\varphi-\pi)ds \nonumber\\
  &+ \int_{-L}^0 (F-F_0)^\bot\cdot \Bigl( \mu^A \eta D_t F+ \widehat{\mu^S}\eta\eta^* (D_t F - D_t^* F^*) \nonumber \\
  & + \widehat{\mu^{SM}}\eta\eta^* (D_t F - D_t^* F^*+ v^M(\partial_s F - \partial_s F^*))\Bigr) ds \,. \label{torque-balance}
\end{align}
This completes the formulation of the rigid filament version of the FBLM. Substitution of \eqref{rigid} into \eqref{force-balance} and \eqref{torque-balance}
gives a system of ordinary differential equations for $F_0$ and $\omega$. Note that coupling with respect to $\alpha$ happens only indirectly through
the interaction between the two filament families.

\section{A geometric simplification: the planar lamellipodium}\label{sec:planar}

Since in keratocytes the leading edge is rather smooth, we approximate a piece of lamellipodium by an infinite strip, parallel to the $x$-axis, and invariant to translations and to reflection. 
For the given data this means that the 
maximal filament length $L$ and the polymerization speed $v$ are constants. As a further simplification, we assume no filament ends inside the 
modeled part of the lamellipodium with the consequence $\eta = 1$ (and $s_0=-L/2$). 

We assume two families of rigid filaments \eqref{rigid} with
\begin{eqnarray*}
  F_0^+(\alpha^+,t) &=& \binom{x(t)+\alpha^+}{y(t)} \,,\quad \alpha^+\in\mathbb{R}\,,\qquad \omega^+(\alpha^+,t) = \omega(t) \in [0,\pi/2] \,,\\
  F_0^-(\alpha^-,t) &=& \binom{-x(t)+\alpha^-}{y(t)} \,,\quad \alpha^-\in\mathbb{R}\,,\qquad \omega^-(\alpha^-,t) = \pi-\omega(t) \in [\pi/2,\pi] \,,
\end{eqnarray*}
giving
$$
  F^\pm(\alpha^\pm,s^\pm,t)  = \binom{\pm x(t) + \alpha^\pm \pm (s^\pm+L/2)\cos\omega(t)}{y(t) + (s^\pm + L/2)\sin\omega(t)} \,,\qquad 
  \alpha^\pm\in\mathbb{R} \,,\quad s^\pm\in [-L,0] \,.
$$
The angle between two crossing filaments and the coordinate change between the two families mentioned in Section \ref{sec:add-myo} are easily computed:
$$
  \varphi=\pi-2\omega, \qquad \alpha^- = \alpha^+ + 2x(t) + (2s^+ + L)\cos\omega(t) \,,\qquad s^- = s^+ \,.
$$
It provides the geometric quantity needed in \eqref{coeff-cl} and \eqref{coeff-myo}: 
$$
  \left| \frac{\partial \alpha^-}{\partial s^+}\right| = 2\cos\omega \,.
$$
This quantity can be interpreted as a measure of the density of crossings, with a maximum at $\omega=0$ (fully collapsed lamellipodium) and a minimum at $\omega=\pi/2$ (all filaments are parallel, no crossings).

With the planar lamellipodium ansatz, the equations \eqref{force-balance} and \eqref{torque-balance} become independent of $\alpha$ and constitute a system
of three ordinary differential equations for the unknowns $(x(t),y(t),\omega(t))$:
\begin{eqnarray}
 \dot x \left[ \mu^A + 4(\mu^S + \mu^{SM}(\pi-2\omega))\cos\omega\right] &=& \frac{f_{0,x} + f_{L,x}}{L} + \mu^A v \cos\omega 
  + 4\mu^S v \cos^2\omega \nonumber\\
  &&+ 4\mu^{SM}(\pi-2\omega)(v-v^M)\cos^2\omega \,,\label{eq:x}\\
 \dot y \mu^A &=& \frac{f_{0,y} + f_{L,y}}{L} +\mu^A v \sin\omega \,,\label{eq:y}\\
 \dot\omega \left[ \mu^A + 4\sin^2\omega \cos\omega (\mu^S + \mu^{SM}(\pi-2\omega))\right] &=&  \frac{6}{L^2} d(\omega)^\bot \cdot (f_0-f_L)
      \nonumber \\
 && + \frac{24}{L^2} \mu^T (\pi-2\omega-\varphi_0)\cos\omega \nonumber \\
 && - \frac{48}{L^2} \mu^{TM}(\pi-2\omega)\omega \cos\omega \,.\label{eq:omega}
\end{eqnarray}

\section{Forces at the filament ends -- steady protrusion}\label{sec:steady}

The membrane stretched around the lamellipodium exerts a force on the polymerizing barbed ends. On the other hand, we assume that the
filaments at the rear of the lamellipodium are connected to stress fibres pulling them backwards, another consequence of actin-myosin interaction. Both the membrane force and the stress fibre force will be described as acting in the negative $y$-direction orthogonal to the leading edge, i.e. 
\begin{equation} \label{forces}
  f_{0,x} = f_{L,x} = 0 \,,\qquad f_{0,y} = -f_{mem}\,,\qquad f_{L,y} = -f_{stress} \,.
\end{equation}
If these forces are modeled as constant, the equation \eqref{eq:omega} for the angle is decoupled from the remaining system. For an analysis of its
dynamic behavior, we choose a model for the stiffness coefficients of the actin-myosin connection:
$$
  \mu^{SM}(\varphi) = \overline{\mu^{SM}} \,(\varphi - \overline\varphi)_+ \,,\qquad
  \mu^{TM}(\varphi) = \overline{\mu^{TM}} \,(\varphi - \overline\varphi)_+ \,,
$$
with $\overline{\mu^{SM}},\overline{\mu^{TM}} >0$, $\varphi_0 < \overline\varphi < \pi$, and with the notation $(.)_+$ for the positive part.

Bistability can now be obtained with appropriate assumptions on the parameters. The right hand side of \eqref{eq:omega} can be written as
$$
  \frac{24}{L^2} \cos\omega \left( \frac{f_{stress} - f_{mem}}{4} + h(\omega)\right) \qquad\mbox{with }
  h(\omega) = \mu^T(\pi-2\omega-\varphi_0) - 2\omega\overline{\mu^{TM}} (\pi-2\omega-\overline\varphi)_+ \,.
$$
It is a simple exercise to prove:

\begin{lem}\label{lem:h}
If
$$
\frac{\overline{\mu^{TM}}}{\mu^T}>\frac{\overline \varphi +\pi-2\varphi_0+2\sqrt{(\pi-\varphi_0)(\overline \varphi-\varphi_0)}}{(\pi-\overline \varphi)^2},
$$
then $h(\omega)$ as defined above has three simple zeroes $\omega_{10}$, $\omega_{20}$, $\omega_{30}$, satisfying
$$
  \frac{\pi}{2} > \omega_{10} = \frac{\pi  -\varphi_0}{2} > \frac{\pi-\overline\varphi}{2} > \omega_{20} > \omega_{30} > 0 \,.
$$
\end{lem}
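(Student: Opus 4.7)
The plan is to split the analysis at $\omega = (\pi-\overline\varphi)/2$, where the positive-part cuts in. On the upper piece $\omega \in [(\pi-\overline\varphi)/2,\pi/2]$ we have $(\pi-2\omega-\overline\varphi)_+=0$, so $h(\omega)=\mu^T(\pi-2\omega-\varphi_0)$ is affine in $\omega$ with derivative $-2\mu^T\neq 0$. Its unique zero is $\omega_{10}=(\pi-\varphi_0)/2$, and the strict inequalities $\varphi_0<\overline\varphi<\pi$ together with $\varphi_0>0$ place $\omega_{10}$ strictly between $(\pi-\overline\varphi)/2$ and $\pi/2$, giving exactly one simple root in that range.

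On the lower piece $\omega\in[0,(\pi-\overline\varphi)/2)$ I would expand the definition to get
\begin{equation*}
h(\omega)=4\overline{\mu^{TM}}\omega^2-2\bigl[\mu^T+\overline{\mu^{TM}}(\pi-\overline\varphi)\bigr]\omega+\mu^T(\pi-\varphi_0),
\end{equation*}
which is a strictly convex parabola (since $h''=8\overline{\mu^{TM}}>0$). First I would evaluate it at the endpoints of the subinterval and observe that $h(0)=\mu^T(\pi-\varphi_0)>0$ and $h((\pi-\overline\varphi)/2)=\mu^T(\overline\varphi-\varphi_0)>0$. Hence the existence of two simple zeros in $(0,(\pi-\overline\varphi)/2)$ is equivalent to the existence of an interior point where $h$ is negative, which by convexity is equivalent to (i) a strictly positive discriminant and (ii) the vertex $\omega^*=[\mu^T+\overline{\mu^{TM}}(\pi-\overline\varphi)]/(4\overline{\mu^{TM}})$ lying in the open subinterval.

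Setting $r=\overline{\mu^{TM}}/\mu^T$, condition (i) is
\begin{equation*}
[1+r(\pi-\overline\varphi)]^2>4r(\pi-\varphi_0),
\end{equation*}
which, viewed as a quadratic inequality in $r$ with leading coefficient $(\pi-\overline\varphi)^2>0$, holds outside the roots. A direct computation of the discriminant in $r$ gives $16(\overline\varphi-\varphi_0)(\pi-\varphi_0)>0$, and the larger root is precisely the right-hand side of the hypothesis on $\overline{\mu^{TM}}/\mu^T$; so the assumption is exactly equivalent to (i). For (ii), the inequality $\omega^*<(\pi-\overline\varphi)/2$ reduces to $r(\pi-\overline\varphi)>1$, and I would check that this is implied (strictly) by the hypothesis since $(\overline\varphi-\varphi_0)+\sqrt{(\overline\varphi-\varphi_0)(\pi-\varphi_0)}>0$; the inequality $\omega^*>0$ is immediate.

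The main obstacle, and the step requiring real care, is the algebraic identification of the hypothesis with the positivity of the discriminant, since one has to correctly identify which root of the $r$-quadratic corresponds to the physically relevant regime. Beyond that the argument is routine: simplicity of the two lower zeros follows from the strict positivity of the discriminant (the two roots are distinct, so $h'$ does not vanish at them), the ordering $\omega_{20}>\omega_{30}>0$ is forced by having the parabola positive at both ends and negative at $\omega^*$ in between, and the strict separation $(\pi-\overline\varphi)/2>\omega_{20}$ comes from $h((\pi-\overline\varphi)/2)>0$.
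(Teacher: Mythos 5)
Your proof is correct. The paper offers no proof of this lemma (it is explicitly left as ``a simple exercise''), so there is nothing to compare against, but your argument is the natural one and all the computations check out: the expansion of $h$ on $[0,(\pi-\overline\varphi)/2)$, the endpoint values $h(0)=\mu^T(\pi-\varphi_0)$ and $h((\pi-\overline\varphi)/2)=\mu^T(\overline\varphi-\varphi_0)$, the factorization of the discriminant of the $r$-quadratic as $4(\pi-\varphi_0)(\overline\varphi-\varphi_0)$, and the identification of its larger root with the threshold in the hypothesis are all right. One small imprecision: the hypothesis is not ``exactly equivalent to (i)'' alone, since (i) also holds for $r$ below the smaller root of the $r$-quadratic; it is the conjunction of (i) and (ii) that is equivalent to $r$ exceeding the larger root (one checks, as you implicitly do, that the smaller root lies below $1/(\pi-\overline\varphi)$, so (ii) fails there) --- this does not affect your argument, which only uses the implication from the hypothesis to (i) and to (ii).
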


\begin{thm}
Under the assumptions of Lemma \ref{lem:h} and for $|f_{stress}-f_{mem}|$ small enough, the ordinary differential equation \eqref{eq:omega} with the forces given by
\eqref{forces} possesses four stationary solutions $\omega_j$, $j=0,\ldots,3$ with 
$$
  \omega_0=\pi/2 > \omega_1 = \frac{\pi  -\varphi_0}{2} + \frac{f_{stress} - f_{mem}}{8\mu^T} > \frac{\pi-\overline\varphi}{2} 
  > \omega_2 > \omega_3 > 0 \,,
$$
where $\omega_0$ and $\omega_2$ are unstable, and $\omega_1$ and $\omega_3$ are asymptotically stable.
\end{thm}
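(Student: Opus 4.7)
The plan is to reduce \eqref{eq:omega} under \eqref{forces} to the scalar equation whose right-hand side already appears in factored form in the paragraph preceding Lemma \ref{lem:h}. First I would substitute $f_0 - f_L = (0,f_{stress}-f_{mem})$ and $d(\omega)^\bot = (-\sin\omega,\cos\omega)$, so that the force contribution becomes $\frac{6}{L^2}\cos\omega\,(f_{stress}-f_{mem})$. Collecting factors turns \eqref{eq:omega} into
\begin{equation*}
\dot\omega\, M(\omega) = \frac{24}{L^2}\cos\omega\left[\frac{f_{stress}-f_{mem}}{4} + h(\omega)\right],
\end{equation*}
with $M(\omega) := \mu^A + 4\sin^2\omega\cos\omega\,(\mu^S + \mu^{SM}(\pi-2\omega)) > 0$ on $[0,\pi/2]$. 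The stationary points are therefore exactly the zeros of the right-hand side: either $\cos\omega = 0$, which yields the force-independent root $\omega_0 = \pi/2$, or $h(\omega) = (f_{mem}-f_{stress})/4$.

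Next I would count and locate the roots of the perturbed equation $h(\omega) = c$ with $c := (f_{mem}-f_{stress})/4$. By Lemma \ref{lem:h} this equation has three simple roots $\omega_{10} > \omega_{20} > \omega_{30}$ for $c = 0$, all lying strictly away from the kink of $\mu^{TM}$ at $\omega = (\pi-\overline\varphi)/2$. Since $h$ is affine on $[(\pi-\overline\varphi)/2,\pi/2]$ and quadratic on $[0,(\pi-\overline\varphi)/2]$, a small perturbation preserves the respective root counts (one and two), and the implicit function theorem places the perturbed roots $\omega_j$ in arbitrarily small neighbourhoods of $\omega_{j0}$; in particular the ordering stated in the theorem persists for $|c|$ small. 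The explicit formula for $\omega_1$ is then obtained by solving the linear equation $\mu^T(\pi-2\omega-\varphi_0) = c$, which is legitimate because $\omega_{10}$, and hence $\omega_1$ for small $|c|$, lies in the region $\omega > (\pi-\overline\varphi)/2$ where $\mu^{TM}(\pi-2\omega) \equiv 0$.

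Stability is finally read off from the sign of $G(\omega) := \cos\omega[(f_{stress}-f_{mem})/4 + h(\omega)]$, since $M > 0$ forces $\dot\omega$ and $G$ to share a sign and a stationary point is asymptotically stable iff $G'$ is negative there. At $\omega_0 = \pi/2$ the cosine factor vanishes, giving $G'(\pi/2) = -[(f_{stress}-f_{mem})/4 + h(\pi/2)] = \mu^T\varphi_0 + O(|f_{stress}-f_{mem}|) > 0$, so $\omega_0$ is unstable. At $\omega_j$ for $j = 1,2,3$ the bracket vanishes instead, so $G'(\omega_j) = \cos\omega_j\,h'(\omega_j)$, which carries the sign of $h'(\omega_j)$. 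A quick sign analysis of $h$, using that it is strictly linearly decreasing on $[(\pi-\overline\varphi)/2,\pi/2]$ and dips from positive endpoint values through two simple zeros on $[0,(\pi-\overline\varphi)/2]$, gives $h'(\omega_{10}), h'(\omega_{30}) < 0$ and $h'(\omega_{20}) > 0$; by continuity these signs persist for the perturbed roots. No step poses a serious obstacle; the only delicate point is the piecewise $C^1$ structure of $h$ at $(\pi-\overline\varphi)/2$, which is neutralised because, by Lemma \ref{lem:h}, all three relevant zeros sit strictly in the interior of their smoothness pieces.
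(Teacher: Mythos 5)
Your proof is correct and follows exactly the route the paper intends: the paper omits the proof (``Again the proof is straightforward'') after having set up the factored right-hand side $\frac{24}{L^2}\cos\omega\,[(f_{stress}-f_{mem})/4+h(\omega)]$ and Lemma \ref{lem:h} precisely so that the argument reduces to perturbing the simple zeros of $h$ and reading off stability from the signs of $h'$ at $\omega_{10},\omega_{20},\omega_{30}$ and of $-B(\pi/2)$ at the boundary root. Your handling of the kink of $h$ at $(\pi-\overline\varphi)/2$ and the exact root counts on each smoothness piece supplies the only details worth spelling out.
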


Again the proof is straightforward.
For the stable steady states, the lamellipodium has the constant protrusion speeds
$$
  \dot y = v \sin\omega_{1,3} - \frac{f_{stress} + f_{mem}}{\mu^A L} \,.
$$
For the equilibrium angle $\omega_1$, we typically expect the speed to be positive. 
It is not affected by actin-myosin interaction. The smaller speed
corresponding to $\omega_3$ might actually be negative due to membrane tension and stress fibres, i.e. the second stable state, where the lamellipodium 
is collapsed by actin-myosin interaction, might be retractive.

Finally, the steady states also produce lateral flow with constant speeds
$$
  \dot x = v\cos\omega_1 \qquad\mbox{and}\qquad 
  \dot x = \left( v - v^M \frac{4\mu^{SM}\cos\omega_3}{\mu^A + 4\mu^S\cos\omega_3 + 4\mu^{SM}\cos\omega_3}\right)\cos\omega_3 \,,
$$
respectively, where in the collapsed state the lateral flow speed produced by polymerization is reduced by actin-myosin interaction.

\section{Coupling of two opposing lamellipodia}\label{sec:coupling}

As a caricature of a cell fragment, we consider two back-to-back planar lamellipodia (see Figure \ref{fig:fig1}E). For notational convenience, the bottom lamellipodium is rotated by $180^o$ in the mathematical description. Therefore we consider two versions of the system
\eqref{eq:x}--\eqref{eq:omega} with unknowns $(x,y,\omega)$ and $(\hat x, \hat y, \hat\omega)$. The assumption that the total forces exerted on the
fragment by membrane tension and by stress fibres vanish, imply that \eqref{forces} is used in both systems with the same values for $f_{mem}$ and
$f_{stress}$. However, we allow the option that these forces are not constant but regulate the size of the fragment, measured by $y+\hat y$.
We first consider the case of a constant given membrane force and a size dependent force by stress fibres:
$$
  \mbox{\bf Case A:} \qquad f_{mem} =\mbox{const} \,, \quad f_{stress} = f_{stress}(y+\hat y) \,.
$$
Typically $f_{stress}$ will be an increasing function, but the details are not important for our considerations.

Adding the equations \eqref{eq:y} for $y$ and $\hat y$ leads to a closed system of three equations for $y+\hat y$, $\omega$, and $\hat\omega$:
\begin{eqnarray}
 (\dot y + \dot{\hat y}) \mu^A &=& -\frac{2}{L}(f_{mem} + f_{stress}(y+\hat y)) +\mu^A v (\sin\omega + \sin\hat\omega) \,,\label{eq:y+yhat}\\
 \dot\omega \,g(\omega) &=&   \cos\omega \left( \frac{f_{stress}(y+\hat y) - f_{mem}}{4} + h(\omega)\right) \,,\label{eq:om1} \\
  \dot{\hat\omega} \,g(\hat\omega) &=&   \cos\hat\omega \left( \frac{f_{stress}(y+\hat y) - f_{mem}}{4} + h(\hat\omega)\right) \,,
      \label{eq:om2}
\end{eqnarray}
with
$$
  g(\omega) = \frac{L^2}{24} \left[ \mu^A + 4\sin^2\omega \cos\omega (\mu^S + \mu^{SM}(\pi-2\omega))\right] \,.
$$
We shall prove that with appropriate assumptions on the data, the problem has 4 stable steady states.

\begin{thm} \label{thm:equilibria}
Let the assumptions of Lemma \ref{lem:h} hold, let the function $f_{stress}$ be continuously differentiable with bounded positive derivative, and let $f_{mem}$, 
$\mu^A vL$, and the Lipschitz constant of $f_{stress}$ be small enough. Then the system \eqref{eq:y+yhat}--\eqref{eq:om2} has four stable
steady states, satisfying
\begin{eqnarray}
&&\omega=\hat\omega = \omega_{10} + O(f_{mem} + \mu^A vL) \,, \label{eqn:stat1}\\
&&\omega=\hat\omega = \omega_{30} + O(f_{mem} + \mu^A vL) \,,\label{eqn:stat2}\\
&&\omega=\omega_{10} + O(f_{mem} + \mu^A vL) \,,\quad \hat\omega = \omega_{30} + O(f_{mem} + \mu^A vL) \,,\label{eqn:mov1}\\
&&\omega=\omega_{30} + O(f_{mem} + \mu^A vL) \,,\quad \hat\omega = \omega_{10} + O(f_{mem} + \mu^A vL) \,.\label{eqn:mov2}
\end{eqnarray}
\end{thm}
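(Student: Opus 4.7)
The plan is to treat the coupled system \eqref{eq:y+yhat}--\eqref{eq:om2} as a weak perturbation of two decoupled copies of the single-$\omega$ equation analyzed in Section \ref{sec:steady}, coupled only through the slow aggregate variable $Y := y + \hat y$. First, I locate the stationary solutions. Setting the right-hand sides to zero and writing $F := (f_{stress}(Y) - f_{mem})/4$, the angle equations with $\cos\omega,\cos\hat\omega \neq 0$ reduce to $h(\omega) = h(\hat\omega) = -F$, while the $Y$-equation becomes $2(f_{mem} + f_{stress}(Y)) = \mu^A v L (\sin\omega + \sin\hat\omega)$. By Lemma \ref{lem:h} the three zeros of $h$ are simple, so for $|F|$ small the implicit function theorem furnishes smooth branches $\omega = \Omega_i(F)$, $i=1,2,3$, with $\Omega_i(0) = \omega_{i0}$. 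Solutions with $\cos\omega = 0$ or $\cos\hat\omega = 0$ are analogues of the unstable state $\omega_0 = \pi/2$ of Section \ref{sec:steady} and will be handled separately.

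Since $f_{stress}$ is strictly increasing with bounded derivative, $Y$ is a smooth function of $F$, and substitution into the $Y$-equation yields the scalar fixed-point problem
\begin{equation*}
  F = \frac{\mu^A v L}{8}\bigl(\sin\Omega_i(F) + \sin\Omega_j(F)\bigr) - \frac{f_{mem}}{2}.
\end{equation*}
For each $(i,j)\in\{1,2,3\}^2$ the right-hand side is a contraction in $F$ once $\mu^A v L$ and the Lipschitz constant of $f_{stress}$ are small enough, producing a unique fixed point $F_{ij}=O(f_{mem}+\mu^A v L)$ and hence a stationary point with $\omega=\omega_{i0}+O(f_{mem}+\mu^A v L)$, $\hat\omega=\omega_{j0}+O(f_{mem}+\mu^A v L)$. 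Nine stationary points arise in total; the four listed in the statement correspond to $(i,j)\in\{1,3\}^2$.

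For stability I linearize at each stationary point. Using the identities $F+h(\omega)=0$ and $F+h(\hat\omega)=0$, the Jacobian reads
\begin{equation*}
  J = \begin{pmatrix}
    -\dfrac{2 f'_{stress}(Y)}{\mu^A L} & v\cos\omega & v\cos\hat\omega \\[4pt]
    \dfrac{\cos\omega\, f'_{stress}(Y)}{4g(\omega)} & \dfrac{\cos\omega\, h'(\omega)}{g(\omega)} & 0 \\[4pt]
    \dfrac{\cos\hat\omega\, f'_{stress}(Y)}{4g(\hat\omega)} & 0 & \dfrac{\cos\hat\omega\, h'(\hat\omega)}{g(\hat\omega)}
  \end{pmatrix}.
\end{equation*}
In the limit $v\to 0$, $J$ is lower triangular with diagonal entries $-2f'_{stress}/(\mu^A L)<0$, $\cos\omega_{i0}\,h'(\omega_{i0})/g(\omega_{i0})$ and $\cos\omega_{j0}\,h'(\omega_{j0})/g(\omega_{j0})$. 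From Section \ref{sec:steady}, $h'(\omega_{10})$ and $h'(\omega_{30})$ are negative while $h'(\omega_{20})$ is positive, so exactly the four choices $(i,j)\in\{1,3\}^2$ yield three strictly negative diagonal entries. Persistence of asymptotic stability for small $v$ follows by continuous dependence of eigenvalues; the remaining non-boundary steady states have at least one positive eigenvalue, and the boundary states inherit instability from $\omega_0$ of Section \ref{sec:steady}.

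The main obstacle is quantifying this last perturbation. The entries $v\cos\omega$ and $v\cos\hat\omega$ are not controlled directly by $\mu^A v L$, but every coupling correction appearing in the characteristic polynomial of $J$ enters through products such as $(v\cos\omega)\bigl(\cos\omega\,f'_{stress}/(4g(\omega))\bigr)$, which are of order $v\cdot\operatorname{Lip}(f_{stress})$. Checking the Routh--Hurwitz conditions directly, these bilinear perturbations stay strictly smaller than the dominant $O(1)$ terms built from $h'(\omega_{i0})$, $h'(\omega_{j0})$, $\cos\omega_{i0}$ and $\cos\omega_{j0}$ whenever $\mu^A v L$ and $\operatorname{Lip}(f_{stress})$ are taken sufficiently small, which closes the stability argument.
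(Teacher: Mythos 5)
Your overall strategy coincides with the paper's: use the steady\mbox{-}state form of \eqref{eq:y+yhat} to reduce the angle equations to $h(\omega)=h(\hat\omega)=O(f_{mem}+\mu^A vL)$, solve near the simple zeros of $h$ by a perturbation argument, then linearize and exploit the fact that the three equations decouple up to terms carrying the small factor $\kappa=f_{stress}'$. Your existence argument (implicit function theorem for the branches $\Omega_i(F)$ plus a contraction in $F$) is a legitimate and in fact more explicit version of the paper's one-line ``straightforward perturbation argument,'' and your Jacobian agrees with the coefficient matrix in the paper. The identification of the stable combinations $(i,j)\in\{1,3\}^2$ via $h'(\omega_{10}),h'(\omega_{30})<0$, $h'(\omega_{20})>0$ is also the paper's mechanism.

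The gap is in the final stability step. As you note yourself, $v$ is not a small parameter (only $\mu^A vL$ is), so the ``$v\to0$ plus continuity'' argument cannot stand alone; but the Routh--Hurwitz patch you offer misidentifies the decisive inequality. Writing the characteristic polynomial as $\lambda^3+a_1\lambda^2+a_2\lambda+a_3$, the conditions $a_1>0$ and $a_1a_2>a_3$ are indeed of the type you describe ($O(1)$ leading terms plus corrections carrying a factor $\kappa$). The condition $a_3=-\det J>0$ is not: with your entries,
\begin{equation*}
  -\det J=\frac{\kappa\cos\omega\cos\hat\omega}{g(\omega)g(\hat\omega)}
  \left[\frac{2\,h'(\omega)h'(\hat\omega)}{\mu^A L}
  +\frac{v}{4}\bigl(\cos\hat\omega\, h'(\omega)+\cos\omega\, h'(\hat\omega)\bigr)\right],
\end{equation*}
so \emph{every} term carries the factor $\kappa$, $\kappa$ cancels from the sign condition, and positivity is equivalent to $\mu^A Lv\bigl(\cos\omega/|h'(\omega)|+\cos\hat\omega/|h'(\hat\omega)|\bigr)$ being below a fixed threshold. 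This is exactly why the theorem assumes $\mu^A vL$ small; it is not a statement about $v\cdot\mathrm{Lip}(f_{stress})$ being small, which is neither implied by the hypotheses (take $\mu^A L=\varepsilon^2$, $v=1/\varepsilon$, $\mathrm{Lip}(f_{stress})=\varepsilon$) nor the relevant quantity, since $\mathrm{Lip}(f_{stress})$ drops out of the sign of $a_3$ entirely. The paper reaches the same dichotomy by expanding the eigenvalues in powers of $\kappa$: two eigenvalues are $O(1)$ perturbations of multiples of $h'(\omega)$ and $h'(\hat\omega)$, and the third is $O(\kappa)$ with sign governed by the bracket above. Replacing your last paragraph by this explicit computation of $a_3$ (or of the slow eigenvalue) closes the proof; the rest of your write-up stands.
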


\begin{proof}
From \eqref{eq:y+yhat} we obtain that steady states have to satisfy
$$
  f_{stress}(y+\hat y) = -f_{mem} + \frac{\mu^A Lv}{2}(\sin\omega + \sin\hat\omega) \,.
$$
This implies, again for stable steady states, $h(\omega) = h(\hat\omega) = O(f_{mem} + \mu^A vL)$. The existence of the four steady states is then a 
consequence of a straightforward perturbation argument. The coefficient matrix in the linearization of \eqref{eq:y+yhat}--\eqref{eq:om2} can be written as
$$
 \left( \begin{array}{ccc}  -2\kappa/(\mu^A L)&v\cos\omega & v\cos\hat\omega \\
                                       A\kappa& Ah^\prime(\omega)& 0 \\
                                        \hat A \kappa& 0&\hat A h^\prime(\hat\omega)\end{array}\right) \,,
$$
with positive constants $A$ and $\hat A$, and with $0<\kappa = f_{stress}^\prime(y+\hat y) \ll 1$. A perturbation analysis of the eigenvalue problem
for small $\kappa$ (i.e. formal expansion of eigenvalues in terms of powers of $\kappa$
and subsequent justification by a contraction argument) gives the eigenvalues
$$
  \lambda_1 = Ah^\prime(\omega) + O(\kappa) \,,\quad \lambda_2 = \hat Ah^\prime(\hat\omega) +O(\kappa) \,,\quad
  \lambda_3 = \kappa\left( -\frac{2}{\mu^A L} + \frac{v\cos\omega}{h^\prime(\omega)} + \frac{v\cos\hat\omega}{h^\prime(\hat\omega)}\right) 
  + O(\kappa^2) \,,
$$
which are all negative at the four steady states for small enough $\kappa$, because of $h^\prime(\omega_{10}), h^\prime(\omega_{30}) < 0$.
\end{proof}

For the steady states the protrusion speed of the fragment is constant and given by
$$
  \dot y = -\dot{\hat y} = \frac{v}{2}(\sin\omega - \sin\hat\omega) \,.
$$
For the symmetric steady states \eqref{eqn:stat1}, \eqref{eqn:stat2}, the protrusion speeds vanish, hence they describe stationary cells (or fragments). The equilibrium angles in the lamellipodia in this case are either both affected by myosin, \eqref{eqn:stat2}, or both result only from cross-link activity, \eqref{eqn:stat1}. The asymmetric steady states \eqref{eqn:mov1} and \eqref{eqn:mov2} describe a protruding, polarized cell. In both cases it consists of a collapsed cell rear, in which myosin is active ($\omega=\omega_{30}$), and a cell front with a steeper equilibrium angle caused only by cross-link activity ($\omega=\omega_{10}$).

Finally, we also mention the case of a constant stress fibre force and a size dependent membrane force:
$$
  \mbox{\bf Case B:} \qquad f_{stress} =\mbox{const} \,, \quad f_{mem} = f_{mem}(y+\hat y) \,.
$$
Without going through the details, we note that the qualitative results are the same and a theorem analogous to Theorem \ref{thm:equilibria} can be proven.

\section{Parameter values -- simulations with the full model}\label{sec:sim}

In this section we demonstrate that with the additional term describing myosin within the lamellipodium, the model is able to produce cells/cell fragments that, depending on the initial conditions, will either remain stationary or start moving. In contrast to the simulations presented in \cite{MOSS} and \cite{MOSS2}, here the movement is achieved without an external cue and without varying the polymerization speed. In the simulation, we work with the full model \eqref{FBLMplus}--\eqref{BC} and not with the simplifications introduced in Sections \ref{sec:rigid} and \ref{sec:planar}. However, the qualitative results of Section \ref{sec:coupling} will be reproduced.

\paragraph{Parameter values:}
Parameter values are chosen as in \cite{MOSS} with the following exceptions and additions:
we work with a constant filament density $\eta=1$ in parameter space, which means that the filament number remains constant with branching and capping always in equilibrium. No pointed ends appear within the simulation region, which 
corresponds to a fixed filament length of $L=3\mu m$. The polymerization speed is fixed at the constant value 
$v=1.5 \mu m\, \text{min}^{-1}$. In \cite{Svitkina} it has been observed that myosin speckles that are formed in the lamellipodium drift inwards with time. This indicates that the myosin velocity has to be smaller than the polymerization speed. We therefore chose $v^M=1 \mu m\, \text{min}^{-1}$. We assume that myosin can only act on actin filaments if the angle between the filaments is more than $\overline \varphi=120\degree$. For the stiffness parameters of stretching and twisting the cross-links and myosin good estimates are hard to obtain, since their exact concentration in the lamellipodium is difficult to determine. However, motivated by Lemma \ref{lem:h}, which requires that $\frac{\overline{\mu^{TM}}}{\mu^T}> 4.91$ we chose that ratio to be 5. For the membrane and stress-fiber forces we chose $f_\text{mem}=\mu^\text{mem}\cdot(d_\text{out}-\overline{d_\text{out}})_+$ and $f_\text{stress}=\mu^\text{stress}\cdot (d_\text{in}-\overline{d_\text{in}})_+$, where $d_\text{out}$ is the fragments' averaged 
outer diameter calculated from the area $A_\text{out}$ by $d_\text{out}=2\sqrt{\frac{A_\text{out}}{\pi}}$. If the total area is replaced by the inner area of the cell without the lamellipodium, one correspondingly obtains the expression for $d_\text{in}$. Additionally we increase the bending stiffness by a factor 10 in order to get closer to the analytical case examined in Sections \ref{sec:rigid}.

\begin{figure}[h!]
        \centering
	\includegraphics[width=\textwidth]{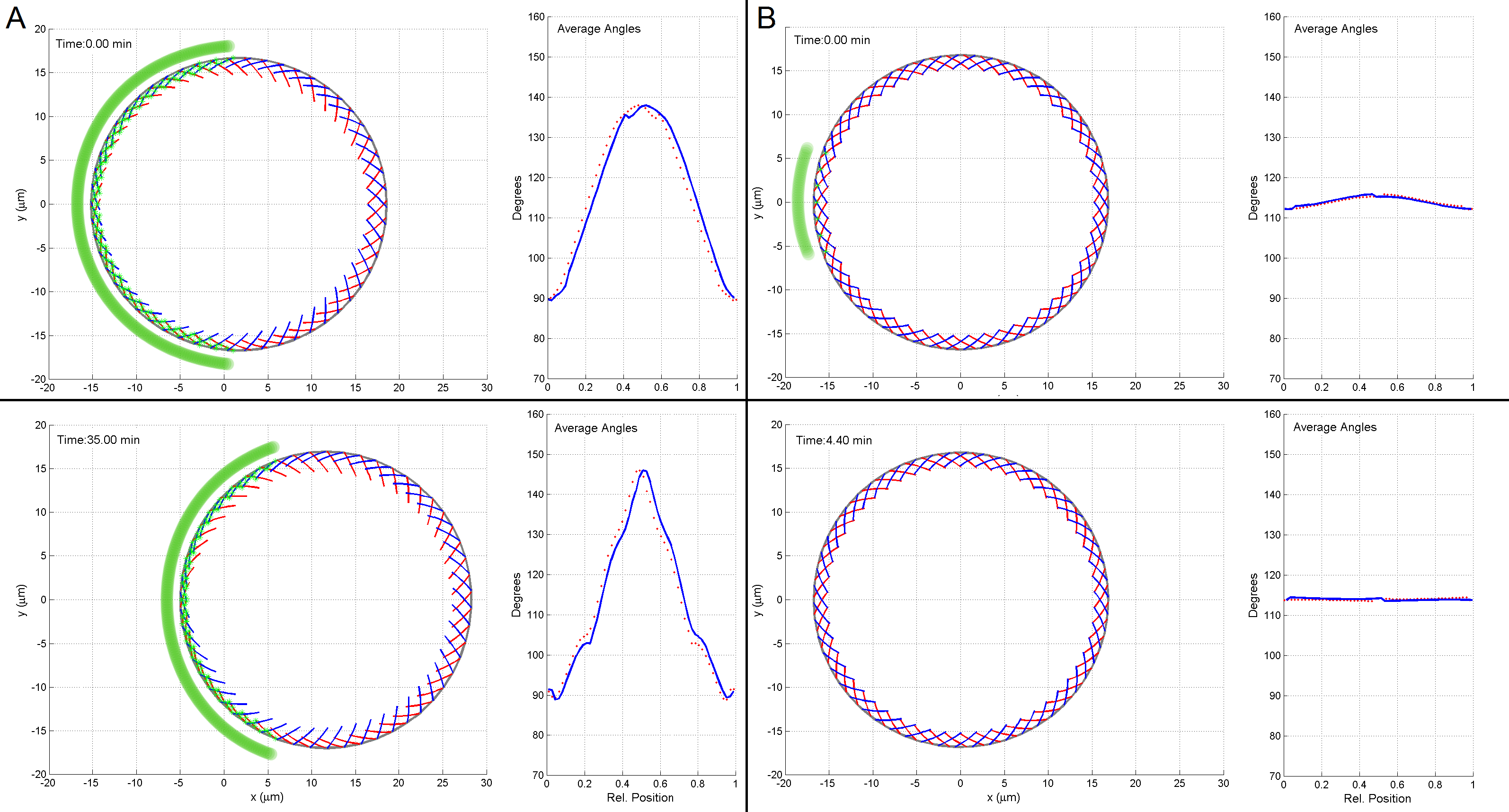}
	\caption{\small Cell view with clockwise filaments in blue and anti-clockwise
	filaments in red. Green stars in the lamellipodium mark the area where myosin is active, further emphasized by the green bands. On the right of each fragment: angle between the filaments of the different families, averaged along the filaments, parametrized along the membrane with 0 being at the very right and going counterclockwise. Top Row: Initial conditions leading to A: a moving fragment, B: a stationary fragment. Bottom Row: Filament positions and average angles at a later time after equilibrium has been reached. Parameters as in Table \ref{tab:parameters}.}
	\label{fig:sim}
\end{figure}

\paragraph{Simulation results:}
Figure \ref{fig:sim} shows the initial conditions and steady state situation for two different simulations done with the same parameters. On the left (Figure \ref{fig:sim}A) a cell is shown, where due to rather anti-parallel angles, initially myosin is able to act within about half the fragment. In this situation an equilibrium is attained in which in the right half of the cell no myosin is active and the angles between filaments are rather steep. In the left half the equlibrium angles attained in the presence of myosin are more anti-parallel. This leads to an equilibrium state in which the fragment moves steadily to the right (see movie in Supplementary Material), which corresponds to a situation described by the steady states \eqref{eqn:mov1} and \eqref{eqn:mov2} in Theorem \ref{thm:equilibria}. On the right (Figure \ref{fig:sim}B), initial conditions have been used where only in a small area of the leftmost part of the fragment, myosin can act on the filaments. However this is not enough to establish 
itself there permanently and hence after a short time the fragment reverts to its rotationally symmetric form and remains stationary. This situation corresponds to the steady state \eqref{eqn:stat1} in Theorem \ref{thm:equilibria}.

\begin{table}[h!]
\caption{Parameter Values}
\label{tab:parameters}       

\begin{tabular}{| c | p{4cm} | p{4cm} | p{4cm} | }
\hline
Var. & Meaning & Value & Comment \\ 

\hline
& & & \\
$\mu^B$ & bending elasticity & $0.7pN\mu m^2$ & 10 times higher than in \cite{Gittes}  \\
$\mu^A$ & macroscopic friction caused by adhesions & $0.14 pN \text{min} \mu m^{-2}$ & measurements in \cite{Li,Oberhauser}, estimation and calculations in \cite{OelSch3,OelSch2,OelSch1}\\
$\overline{d_\text{in}}$ & equilibrium inner diameter & $27.6 \mu m$ & order of magnitude as in \cite{Verkhovsky}\\
$\overline{d_\text{out}}$ & equilibrium outer diameter & $33.3 \mu m$ & order of magnitude as in \cite{Verkhovsky}\\
$v$ &  polymerization speed & $1.5\mu m \,\text{min}^{-1}$ & in biological range\\
$\varphi_0$ & equilibrium cross-link angle & 70\degree & equal to the branching angle\\
$\mu^S$ & cross-link stretching constant & $2.6\!\times\! 10^{-2} pN\, min\, \mu m^{-1} $& \\
$\mu^T$ & cross-link twisting constant & $0.21 pN\,\mu m$ & \cite{OelSch3, Goldmann} and computations in \cite{OelSch2,OelSch1}\\
$v^M$ & myosin velocity & $1 \mu m\,\text{min}^{-1}$ & order of magnitudes as in \cite{Svitkina}\\
$\overline \varphi$ & myosin cut-off & $120\degree$ &\\
$\overline{\mu^{SM}}$ & myosin stretching constant & $2.6\!\times\! 10^{-2} pN\, \text{min}\, \mu m^{-1} $& \\
$\overline{\mu^{TM}}$ & myosin twisting constant & $1 pN\,\mu m$ & motivated by Lemma \ref{lem:h}\\
$\mu^\text{mem}$ & membrane force & $5\!\times\! 10^{-3} pN \mu m^{-1}  $ &\\
$\mu^\text{stress}$ & stress fiber force & $5\!\times\! 10^{-2}  pN \mu m^{-1}  $ &\\

\hline
\end{tabular}
\end{table}

\end {document}